\newtheorem{observation}{Observation}
\begin{document}

% DO NOT REMOVE: Creates space for Elsevier logo, ScienceDirect logo
% and ENDM logo
\begin{verbatim}\end{verbatim}
%\vspace{1.5cm}

\begin{frontmatter}

\title{Exact algorithms for dominating induced matchings}

\author{Min Chih Lin$^{3,1}$\thanksref{thxoscar}\thanksref{oscaremail} \qquad\qquad Michel J Mizrahi}
\address{CONICET, Instituto de C\'alculo and Departamento de Computaci\'on\\ Universidad de Buenos Aires\\ Buenos Aires, Argentina}

%\author{Michel J Mizrahi\thanksref{thxoscar}\thanksref{michelemail}}
%\address{CONICET, Instituto de C\'alculo and Departamento de Computaci\'on\\ Universidad de Buenos Aires\\ Buenos Aires, Argentina}

\author{Jayme L Szwarcfiter\thanksref{thxjayme}\thanksref{jaymeemail}}
\address{Inst de Matem\'atica, COPPE and NCE  \\ Universidade Federal do Rio de Janeiro \\
%Instituto Nacional de Metrologia, Qualidade e Tecnologoa \\
Rio de Janeiro, Brazil}

\thanks[oscaremail]{Email:
   \href{mailto:oscarlin@dc.uba.ar} {\texttt{\normalshape
   \{oscarlin,mmizrahi\}@dc.uba.ar}}}
\thanks[jaymeemail]{Email:
   \href{mailto:jayme@nce.ufrj.br} {\texttt{\normalshape
   jayme@nce.ufrj.br}}}
   
\thanks[thxoscar]{Partially supported by UBACyT Grants 20020100100754 and 20020090100149, PICT ANPCyT
Grant 1970 and PIP CONICET Grant 11220100100310.}
\thanks[thxjayme]{Partially supported by CNPq, CAPES and FAPERJ, brazilian research agencies. Presently visiting the Instituto Nacional de Metrologia, Qualidade e Tecnologia", Brazil.} 
  
%\author{My Co-author\thanksref{coemail}}
%\address{My Co-author's Department\\ My Co-author's University\\
 %  My Co-author's City, My Co-author's Country} \thanks[ALL]{Thanks
  % to everyone who should be thanked} \thanks[myemail]{Email:
   %\href{mailto:myuserid@mydept.myinst.myedu} {\texttt{\normalshape
%   myuserid@mydept.myinst.myedu}}} \thanks[coemail]{Email:
 %  \href{mailto:couserid@codept.coinst.coedu} {\texttt{\normalshape
  % couserid@codept.coinst.coedu}}}

%\author{My Name\thanksref{ALL}\thanksref{myemail}}
%\address{My Department\\ My University\\ My City, My Country}

%\author{My Co-author\thanksref{coemail}}
%\address{My Co-author's Department\\ My Co-author's University\\
 %  My Co-author's City, My Co-author's Country} \thanks[ALL]{Thanks
  % to everyone who should be thanked} \thanks[myemail]{Email:
   %\href{mailto:myuserid@mydept.myinst.myedu} {\texttt{\normalshape
%   myuserid@mydept.myinst.myedu}}} \thanks[coemail]{Email:
 %  \href{mailto:couserid@codept.coinst.coedu} {\texttt{\normalshape
  % couserid@codept.coinst.coedu}}}

\begin{abstract}
Say that an edge of a graph $G$ dominates itself and every other
edge adjacent to it. An edge dominating set of a graph $G=(V,E)$
is a subset of edges $E' \subseteq E$ which dominates all edges of
$G$. In particular, if every edge of $G$ is dominated by exactly
one edge of $E'$ then $E'$ is a dominating induced matching. It is
known that not every graph admits a dominating induced matching,
while the problem to decide if it does admit is NP-complete. In
this paper we consider the problem of finding a minimum weighted
dominating induced matching, if any,  of a graph with weighted
edges. We describe two exact algorithms for general graphs. The
algorithms are efficient in the cases where $G$ admits a known
vertex dominating set of small size, or when $G$ contains a
polynomial number of maximal independent sets.
\end{abstract}

\begin{keyword}
%Please list keywords for your paper here, separated by commas.
algorithms, dominating induced matchings,maximal independent sets,
vertex dominating sets
\end{keyword}

\end{frontmatter}

\section{Introduction}\label{intro}
By $G(V,E)$ we denote a \emph{simple undirected graph} with vertex
set $V$ and edge set $E$, $n=|V|$ and $m=|E|$. We consider $G$ as a \emph{weighted}
graph, that is, one in which there is a non-negative real weight
assigned to each edge of $G$.
%Let $G$ be a simple weighted undirected graph, i.e, a graph
%without loops and multiple edges with vertex set $V$, edge set $E$
%and a weight function $w: E \to \mathcal{R}$. For a graph $G$, we
%denote by $V$ and $E$ the vertex set and the edge set of $G$,
%respectively.
If $v \in V$ and $W \subseteq V$, then denote by $N(v)$, the set
of vertices adjacent (neighbors) to $v$, denote by $G[W]$ the
subgraph of $G$ induced by $W$, and write $N_W(v)=N(v) \cap W$.
Say that $D\subseteq V$ is a \emph{(vertex) dominating set} of $G$
if $D \cup N(D) = V$, where $N(D)=\bigcup _{v\in D} N(v)$.
% $\forall v\in V\setminus D$, $\exists w\in D$, such that $v \in N(w)$.

Given an edge $e \in E$, say that $e$ \emph{dominates} itself and
every edge sharing a vertex with $e$. Subset $E'\subseteq E$ is an
\emph{induced matching} of $G$ if each edge of $G$ is dominated by
at most one edge in $E'$. A \emph{dominating induced matching
(DIM)} of $G$ is a subset of edges which is both dominating and an
induced matching. Not every graph admits a DIM, and the problem of
determining whether a graph admits it is also known in  the
literature as \emph{efficient edge  domination problem}. The
weighted version of DIM problem is to find a DIM such that the sum
of weights of its edges is minimum among all DIM's, if any.

The (unweighted version) of the dominating induced matching
problem is known to be NP-complete \cite{Gr-Sl-Sh-Ho}, even for
planar bipartite graphs \cite{Lu-Ko-Ta} or regular graphs
\cite{Ca-Ce-De-Si}. There are polynomial time algorithms for some
classes, as chordal graphs \cite{Lu-Ko-Ta}, generalized
series-parallel graphs \cite{Lu-Ko-Ta} (both for the weighted
problem), claw-free graphs \cite{Ca-Ko-Lo}, graphs with bounded
clique-width \cite{Ca-Ko-Lo}, convex graphs \cite{Ko}, bipartite
permutation graphs \cite{Lu-Ta} (see also \cite{Br-Lo}).

In this paper, we describe two exact (exponential time) algorithms
for the weighted problem. The first runs in linear time for a
given vertex dominating set of fixed size of the graph. The second
runs in polynomial time if the graph admits a polynomial number of
maximal independent sets.

We will use an alternative definition, taken from  \cite{Do-Lo},  of
the problem of finding a dominating induced matching. It asks to
determine if the vertex set of a graph $G$ admits a partition into
two subsets $W$ and $B$ such that $W$ is an independent set and
$B$ induces an 1-regular graph. The vertices of $W$ are called
\emph{white} and those  of $B$ are \emph{black}.

We consider only graphs without isolated vertices, because
isolated vertices must be white in any black-white partition. So,
we can delete them and solve the problem in the residual graph.

Assigning one of the two possible colors to vertices of $G$ is
called a coloring of $G$. A coloring is \emph{partial} if only
part of the vertices of G have been assigned colors, otherwise it
is \emph{total}. A partial coloring is {\it valid} if no two white
vertices are adjacent and no black vertex has more than one black
neighbor. A black vertex is \emph{single} if it has no black
neighbors, otherwise, it is \emph{paired}. A total coloring is
valid if no two white vertices are adjacent and every black vertex
is paired. Clearly, $G$ admits a DIM if and only if it admits a
total valid coloring. In fact, a total valid coloring defines
exactly one DIM, given by the set $B$.

For a coloring $C$ of the vertices of $G$, denote by
$C^{-1}(white)$ and $C^{-1}(black)$, the subsets of vertices
colored white and black. A coloring $C'$ is an {\it extension} of
a $C$ if  $C^{-1}(black)\subseteq C'^{-1}(black)$ and
$C^{-1}(white) \subseteq C'^{-1}(white)$.

\section{An Algorithm Based on Vertex Domination}

Next, we will propose an exact algorithm for solving the
weighted dominating induced matching problem, for general graphs,
based on vertex dominations.
%The first one has running time $\theta(1.442^n \cdot n \cdot m)$ and the second one ...
%The main idea is to list each maximal independent set (which can be done in $O(3^{\frac{n}{3}}) \cite{}$) and search an associated DIM for each one.

%\subsection{DIM algorithm using dominating set}

Let $C$ be a \emph{partial} valid coloring of $G = (V,E)$. Such as
in \cite{Do-Lo}, this coloring can be further propagated according to
the following rules:

\begin{enumerate}
    \item each neighbor of a white vertex must be black
    \item Except for its pair, the neighbors of a paired black vertex must be
    white
    \item each vertex with two black neighbors must be white
    \item if a single black vertex has exactly one uncolored neighbor then this neighbor must be black
\end{enumerate}

We can  propagate the coloring iteratively, until it becomes no
more possible to color new vertices, simply by the application of
the above rules. Then we check for validity. A valid coloring so
obtained is then called {\it stable}.

Let $C$ be a partial valid coloring of $G$, and $C'$ be a stable
coloring obtained from $C$, by the applications of rules (i)-(iv),
above. Denote by $D$ and $D'$, respectively the subsets of vertices
of $G$ which are colored in $C$ and $C'$. Clearly, $D' \supseteq
D$. For our purposes, assume that the initial set $D$ of colored
vertices  is a vertex dominating set of $G$.

\begin{lemma} Let $C'$ be a stable coloring. Then
\begin{enumerate}
\item If there are no single (black) vertices then $C'$ is a total
coloring,  \item  Any uncolored vertex has exactly one black
neighbor, and such a neighbor must be single.
\end{enumerate}
\end{lemma}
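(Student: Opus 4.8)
The plan is to establish item~(2) first and then derive item~(1) as an immediate corollary. For item~(2), let $v$ be a vertex that is left uncolored by the stable coloring $C'$, so $v \notin D'$ and hence $v \notin D$. Since $D$ is a vertex dominating set of $G$ and $v \notin D$, vertex $v$ has a neighbor in $D$; thus $v$ has at least one neighbor that is already colored in $C$, and therefore in $C'$ as well. I would next observe that no colored neighbor of $v$ can be white: if some neighbor $w$ of $v$ were white in $C'$, then by rule~(i) together with the stability of $C'$ the vertex $v$ would have been forced to be black, contradicting the choice of $v$. Hence every colored neighbor of $v$ is black, and there is at least one such neighbor.

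I would then bound the number of black neighbors of $v$ from above. If $v$ had two black neighbors in $C'$, rule~(iii) and stability would force $v$ to be colored white, again a contradiction. So $v$ has exactly one black neighbor, say $b$. Finally, suppose $b$ were paired in $C'$; since $v$ is uncolored, $v$ is not the pair of $b$, so rule~(ii) and stability would force $v$ to be white, a contradiction. Therefore $b$ is single, and item~(2) follows.

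For item~(1), assume that $C'$ has no single black vertex. If $C'$ were not total, choose any uncolored vertex $v$; by item~(2) it would have a single black neighbor, contradicting the assumption. Hence $C'$ is a total coloring.

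The argument is short, and the step needing the most care is the correct use of the two hypotheses: it is precisely the assumption that $D$ is a vertex dominating set that guarantees an uncolored vertex has a colored neighbor at all, and it is the stability of $C'$ that allows each propagation rule to be read as "this vertex is actually colored so-and-so in $C'$" rather than merely "this vertex would have to be colored so-and-so". Note that rules~(i)--(iii) suffice for this lemma; rule~(iv) is not needed here.
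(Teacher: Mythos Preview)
Your proof is correct and follows essentially the same route as the paper's: use that $D$ is dominating to guarantee an uncolored vertex $v$ has a colored neighbor, then apply rules~(i), (iii), and (ii) in turn (via stability) to force that neighbor to be black, unique, and single, from which item~(1) follows immediately. Your presentation is slightly cleaner in that you prove item~(2) first and read off item~(1) as a corollary, and you also make explicit that \emph{every} colored neighbor of $v$ must be black (not just one in $D$), but the argument is substantively the same.
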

\begin{proof}
   Recall that $D$ is the initial colored vertices and is a dominating set of the graph $G$. $C'$ is not a total coloring if only if there is some uncolored vertex $v$. Clearly, $v \not \in D$ and $N(v)\cap D \ne \emptyset$. Let $w$ be some neighbor of $v$ in $D$. If the color of $w$ is white then $v$ must be colored black by rule (i) which is a contradiction. Hence $w$ is a black vertex. Again, if $w$ is a paired black vertex or $v$ has another black neighbor $w'\ne w$, $v$ must get color white by rules (ii) and (iii) and this is a contradiction. Consequently, $v$ has exactly one black neighbor and which is single black vertex. Therefore, if there are no single black vertices then there are not uncolored vertices and $C'$ is a total coloring.
\end{proof}

%Let $S = \{s_1, \ldots, s_p\}$ be the set of single vertices of
%the stable coloring $D'$,
Let $D'$ be the colored vertices of the stable coloring $C'$, let
$S = \{s_1, \ldots, s_p\}$ be the set of single vertices, and $U$
the set of still uncolored vertices of $G$, that is, $U = V
\setminus D'$. The above lemma implies that $U$ admits a partition
into (disjoint) parts:

\centerline{$ U = (N(s_1) \cap U) \cup \ldots \cup (N(s_p) \cap
U)$}

%Let $C'$ be a partial stable coloring of $G$, if existing, after
%propagating an initial coloring $C$ with the mentioned rules.
%($C'$ is an extension of $C$), and let $D'=C'^{-1}(white) \cup
%C'^{-1}(black)$ be the vertices with a color assigned according to
%$C'$. Let $U = \{u_1, u_2, \ldots, u_p\}$ are single black
%vertices of $C'^{-1}(black)$. By these four rules, any uncolored
%vertex has at most one neighbor in $D'$ which is a single black
%vertex.

%If $D=C^{-1}(white) \cup C^{-1}(black)$ is a dominating set, then $D'$ is also a dominating set since $D \subseteq D'$. In this case, any uncolored vertex has exactly one neighbor in $D'$ which is a single black vertex. This means $\overline{D'}=V\setminus D'$ can be partitioned into $\rho(U)=\{N_{\overline{D'}}(u_1)=N(u_1)\cap \overline{D'},N_{\overline{D'}}(u_2)=N(u_2)\cap \overline{D'},\dots, N_{\overline{D'}}(u_p)=N(u_p)\cap \overline{D'}\}$.
\begin{theorem}
%If $C$ is extensible to a total valid coloring $C''$ and $D$ is a
%dominating set then $S \subseteq C^{-1}(black)  \subseteq  D$ and
%C''$ is an extension of $C'$.
%\emph{partial} coloring $C$, then the amount of black vertices with uncolored neighbors in $D'$ are contained in $B_D$.
Let $C$ be a coloring of the vertices of $G$, $C'$ a stable
extension of it, and $D = C^{-1}(black) \cup C^{-1}(white)$ a
dominating set of $G$. Then (i) $S \subseteq C^{-1}(black)$; and
(ii) if $C$ extends to a valid total coloring $C''$ then $C''$ is
an extension of $C'$.
\end{theorem}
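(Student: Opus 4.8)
The plan is to prove part (ii) first, by analysing the propagation one step at a time, and then to recycle that analysis for part (i).

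For (ii), I would record the stabilization of $C$ as a sequence of partial valid colorings $C = C_0, C_1, \dots, C_k = C'$, where each $C_{j+1}$ arises from $C_j$ by applying one of the rules (i)--(iv) to colour a single previously uncoloured vertex, and then show by induction on $j$ that the valid total coloring $C''$ is an extension of every $C_j$. The base case $j=0$ is the hypothesis. For the inductive step, assume $C''$ extends $C_j$ and let $v$ be the vertex coloured in passing to $C_{j+1}$; I would go through the four rules. If rule (i) forces $v$ black because of a white neighbour $w$ of $v$ in $C_j$, then $w$ is white in $C''$, so $v$ cannot be white in $C''$ (no two white vertices are adjacent), and, $C''$ being total, $v$ is black in $C''$. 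Rules (ii) and (iii) are handled similarly, using that a black vertex of the valid total coloring $C''$ has exactly one black neighbour: in rule (ii) the paired black vertex $p$ and its partner are black in $C''$, so the remaining neighbour $v$ of $p$ cannot be black; in rule (iii) the two black neighbours of $v$ are black in $C''$, so $v$ cannot be black. The rule (iv) case is the delicate one: there $v$ is the unique uncoloured neighbour of a single black vertex $s$ of $C_j$, so all other neighbours of $s$ are already coloured and --- $s$ being single --- coloured white; they stay white in $C''$, but $s$ (black in $C''$) must be paired there, so its black neighbour in $C''$ can only be $v$, which is therefore black in $C''$. Hence $C''$ extends $C_{j+1}$, and $j=k$ gives (ii).

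For (i) I would argue by contradiction: suppose $s \in S$ but $s \notin C^{-1}(black)$. Since $C'$ extends $C$ and $s$ is black in $C'$, $s$ is not white in $C$, hence uncoloured in $C$, i.e.\ $s \notin D$. As $D$ dominates $G$, $s$ has a neighbour $d \in D$, which is coloured in $C'$; since $s$ is single, $d$ is not black, so $d$ is white in $C'$ and hence in $C$. Being a single vertex of the stable coloring that still carries an uncoloured neighbour (this is the relevant case, since the partition $U=\bigcup_i (N(s_i)\cap U)$ is read off from exactly such vertices), $s$ has some $u \in N(s)\cap U$, and by part (2) of the lemma $s$ is the \emph{unique} black neighbour of $u$. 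Now $u \notin D$, because $u$ is uncoloured in $C'$, so $u$ has a neighbour $d' \in D$; as $s \notin D$ we get $d' \ne s$, so $d'$ is not a black neighbour of $u$, hence $d'$ is white in $C'$. But then rule (i) applies at $C'$ to colour $u$ black, contradicting the stability of $C'$. Therefore $s \in C^{-1}(black)$.

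I expect the rule (iv) step of part (ii) to be the main obstacle: one must use precisely that, even though every already-coloured neighbour of the triggering single vertex is white and remains white in $C''$, that vertex is nonetheless compelled to become paired in the valid total coloring, which pins $v$ down as its only possible partner. A minor point to settle in part (i) is the reading of $S$: a single vertex of $C'$ all of whose neighbours are already white could never become paired in a valid total extension, so it must be excluded from $S$ (equivalently, one may add the hypothesis that $C$ does extend to a valid total coloring), which is exactly what permits choosing the uncoloured neighbour $u$ used above.
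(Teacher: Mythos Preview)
Your proposal is correct and follows essentially the same route as the paper: part~(ii) is exactly the paper's argument (the paper compresses your rule-by-rule induction into a one-line appeal to the determinism of the propagation rules), and part~(i) proceeds via the same contradiction from an uncoloured neighbour $u$ of $s\notin D$, the only cosmetic difference being that the paper terminates by showing $u$ has \emph{no} neighbour in $D$ (contradicting domination) whereas you show $u$ has a \emph{white} neighbour in $D$ (contradicting stability via rule~(i)). Your closing remark about needing the extensibility hypothesis in part~(i) is also precisely what the paper does: it explicitly invokes that hypothesis to dispose of the case where $s_i$ has no uncoloured neighbour.
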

\begin{proof}
Suppose that $S  \not \subseteq C^{-1}(black)$ which means that
exists a vertex $s_i \in S$ and $s_i \notin C^{-1}(black)$. By
definition of $S$, $s_i$ is a single black vertex. If $s_i \in D$
then $s_i \in C^{-1}(black)$, contradiction. Therefore $s_i \notin
D$.

Since $D$ is a dominating set, then $\exists v \in D$ such that
$s_i \in N(v)$. If $v$ is black then $s_i$ is not a single black
vertex, again a contradiction. Hence $v$ must be white.
\begin{itemize}
    \item If $s_i$ has no uncolored neighbors then $C$ is not extensible to a total valid coloring because $s_i$ can not become a paired vertex, contrary to the hypothesis.
    \item Otherwise, let $y$ be an uncolored neighbor of $s_i$. Clearly, $y \not \in D$. Since $y$ is uncolored then it has exactly one neighbor in $D'$. That is,  $s_i$ is the unique neighbor of $y$ in $D'$. Since $D\subseteq D'$ and $s_i\in D' \setminus D$, it follows that $D$ is not a dominating set,
    contradiction.
\end{itemize}
On the other hand, $C'$ and $C''$ are extensions of $C$. Then the vertices of D have the same color in these colorings. Any colored vertex $v\not \in D$ of $C'$ was obtained by some propagation rule base on previous colored vertices. The rules are correct and deterministic. Hence, $v$ must have the same color in $C''$ and $C''$ is an extension of $C'$.
\end{proof}

Clearly, given a partial valid coloring $C$, we can compute
efficiently  a stable extension $C'$ of it. In addition, if $D$ is
a dominating set then we can try to obtain a total valid coloring
from the stable coloring $C'$ by appropriately choosing exactly
one  vertex from each subset $N_U(s_i)$, to be black, that
is, to be the pair of the so far single vertex $s_i$.
% We use next helpful property.
%The following property is helpful in the above selection process.

\begin{lemma}
Let $U$ and $S$, respectively  be the sets of uncolored and single
vertices, relative to some stable coloring $C'$ of graph $G$. If
$C'$ extends to a total valid coloring then, for each $s_i \in S$,
$G[N_U(s_i)]$ is a union of a star and an independent set, any
of them possibly empty. Moreover, the pair of $s_i$ must be a maximum degree vertex in $G[N_U(s_i)]$.
\end{lemma}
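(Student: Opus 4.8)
The plan is to analyze the local structure of $G[N_U(s_i)]$ under the assumption that $C'$ extends to a total valid coloring $C''$. First I would fix $s_i \in S$ and observe that in any such extension $C''$, exactly one vertex of $N_U(s_i)$ receives color black --- namely the pair of $s_i$ --- and all the remaining vertices of $N_U(s_i)$ must be white. This follows from Lemma 1(ii) and the propagation rule (ii): once $s_i$ becomes paired, every neighbor of $s_i$ other than its pair is forced white, and $N_U(s_i) \subseteq N(s_i)$. Call the pair $t_i$.

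Next I would examine adjacencies inside $N_U(s_i)$. Since all vertices of $N_U(s_i) \setminus \{t_i\}$ are white in $C''$, and white vertices form an independent set, there can be no edge of $G$ between two vertices of $N_U(s_i) \setminus \{t_i\}$. Hence every edge of $G[N_U(s_i)]$ must be incident to $t_i$. This immediately shows $G[N_U(s_i)]$ is a star centered at $t_i$ (spanning those neighbors of $t_i$ inside $N_U(s_i)$) together with the remaining vertices of $N_U(s_i)$, which form an independent set; either piece may be empty (the star is empty if $t_i$ has no neighbor inside $N_U(s_i)$, the independent part is empty if $t_i$ is adjacent to all of them). For the ``moreover'' clause: any vertex of $N_U(s_i)$ other than $t_i$ is an isolated vertex or a leaf of the star in $G[N_U(s_i)]$, so it has degree at most $1$, whereas $t_i$ has degree equal to the number of its neighbors inside $N_U(s_i)$; thus $t_i$ is a maximum degree vertex of $G[N_U(s_i)]$, so the pair of $s_i$ must be chosen among the maximum degree vertices. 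One should note that when $G[N_U(s_i)]$ consists only of isolated vertices (all degrees $0$), the ``maximum degree vertex'' condition is vacuous and indeed any vertex of $N_U(s_i)$ is an eligible pair, consistent with the statement.

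The main subtlety --- though it is more a matter of care than a genuine obstacle --- is justifying that the structural constraint holds for \emph{the} extension without knowing which vertex is $t_i$ in advance; the argument above is clean because it derives the star/independent-set decomposition from the existence of \emph{some} valid total extension, and the center of the star is forced to be exactly the pair of $s_i$. A second point to verify is that the parts $N_U(s_i)$ are pairwise disjoint (already established by the partition displayed before the theorem via Lemma 1(ii)), so coloring decisions in different parts do not interfere, which is what makes the ``choose one maximum degree vertex per part'' strategy well-defined. I do not expect either point to require more than a sentence or two.
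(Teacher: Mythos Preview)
Your argument is correct and rests on the same underlying observation as the paper --- that in any total valid extension exactly one vertex of $N_U(s_i)$ is black (the pair $t_i$) and all others are white --- but the organization differs. You proceed directly: since the white vertices are independent, every edge of $G[N_U(s_i)]$ must be incident to $t_i$, giving the star-plus-independent-set structure immediately, and the ``moreover'' clause then falls out because every non-center vertex has degree at most~$1$. The paper instead argues by contradiction and case analysis: if $G[N_U(s_i)]$ were not a star plus an independent set it would contain two vertex-disjoint edges or a triangle; disjoint edges force one black endpoint each, yielding a black $P_3$ or $K_3$ through $s_i$, while a triangle together with $s_i$ forms a $K_4$ --- both impossible. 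For the second claim the paper again argues by contradiction (a non-maximum-degree pair leaves two adjacent white vertices). Your direct route is shorter and avoids the case split; the paper's version has the minor side benefit of naming the forbidden substructures ($2K_2$, $K_3$, $K_4$) explicitly, which can be useful elsewhere.
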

\begin{proof}
   Suppose by contrary that $G[N_U(s_i)]$ is not a union of a star and an independent set. Then $G[N_U(s_i)]$ contains either two non-adjacent edges, or a $K_3$.
   \begin{itemize}
   		\item Let $\{(u_1,u_1), (v_1,v_2)\}$ be two disjoint edges in $G[N_U(s_i)]$. Since no white vertices can be adjacent, let $u'$ be the black vertex from $\{ u_1,u_2 \} $ and $v$ the black vertex from $\{v_1,v_2 \} $. Then $\{u,s_i,v\}$ is a black $P_3$ or $K_3$ and therefore can not be extended to a valid coloring.
   		\item Let $\{(u_1,u_2,u_3)\}$ be a $K_3$ in $G[N_U(s_i)]$. Therefore $\{s_i,u_1,u_2,u_3\}$ is a $K_4$ and therefore $G$ has no valid coloring.
   \end{itemize}
Consequently, $G[N_U(s_i)]$ must be a union of a star and an independent set.
Now, suppose by contrary that the pair of $s_i$ is a vertex $v\in N_U(s_i)$ and $v$ has not maximum degree in $G[N_U(s_i)]$. Clearly, el rest of vertices in $N_U(s_i)$ are white vertices. In particular, a maximum degree vertex $u$ in $G[N_U(s_i)]$ is white.
But, there is some neighbor $z\ne v$ of $u$ in $N_U(s_i)$ and $z$ is not adjacent to $v$. Hence, $z$ and $u$ are white adjacent vertices, which is a contradiction.
\end{proof}

We can repeatedly execute the procedure below described
%, and propagation rules
for choosing the vertices to be paired to the
single vertices $s_i$ of the partial colorings. The procedure is
repeated until all parts of the partition $U = N_U(s_1)
\cup \ldots \cup N_U(s_p)$ have selected their paired black
vertices or the coloring becomes invalid.
%\bigskip
\\
\\
Let $s_i \in S$ be a single vertex. {\it Case 1}:  $N_U(s_i)=
\emptyset$: then stop, it will not lead to a valid one. {\it Case
2}: There is exactly one maximum degree vertex in $G[N_U(s_i)]$:
then clearly, the only alternative is to choose this vertex. {\it
Case 3}: There is no edge $vw$, where $ v \in N_U(s_i)$ and $w \in
N_U(s_j)$, for any $j \ne i$: then the choice of the neighbor of
$s_i$ to become black is independent on the choices of the others
parts of the partition. Choose the vertex $w$ of maximum degree in
$G[N_U(s_i)]$ that minimizes the weight of the edge $ws_i$. {\it
Case 4}: There is an edge $vw$, where $v \in N_U(s_i)$ and $w \in
N_U(s_j)$, for some  $j \ne i$: then $v$ may become white if and
only if $w$ may become black. Each of these two choices may lead
to valid or invalid total colorings. So, we proceed with both
alternatives, as if in parallel.
%$N_{\overline{D'}}(u_i)$ and $N_{\overline{D'}}(u_j)$ ($v\in
%N_{\overline{D'}}(u_i)$ and $w\in N_{\overline{D'}}(u_j)$), in
%this case, exactly one of these two vertices is black and the
%other is white, we proceed with both alternatives in parallel.
\\
\\
After applying any the above Cases  2, 3 or 4,  perform the
propagation rules again and validate the coloring so far obtained.
Proceed so until eventually the coloring becomes invalid,  or a
valid solution is obtained. At the end, choose the minimum weight
solution obtained throughout the process.
\\
%We consider the minimum weight DIM among solutions obtained from
%all parallel executions.

As for the complexity, it is clear that it depends on the
cardinality of the dominating set $D$ and on the number of
parallel iterations, considered sequentially. Next, we describe
bounds for these parameters.

\begin{lemma}
There are at most $2^{q}$ parallel computations where $q \leq p =|S|
\leq |D|$, and $q \leq \frac{n}{3}$.
\end{lemma}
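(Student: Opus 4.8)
The plan is to bound each of the three quantities in the chain $q \le p = |S| \le |D|$ and $q \le n/3$ separately. The inequality $p = |S|$ is just the definition of $p$. For $|S| \le |D|$: by Theorem (the one immediately preceding, part (i)), we have $S \subseteq C^{-1}(black) \subseteq D$, so $|S| \le |D|$ is immediate. For the bound $q \le p$ on the number of parallel branches: a parallel split only occurs in Case 4, i.e.\ when processing a single vertex $s_i$ whose part $N_U(s_i)$ is joined by an edge to some other part $N_U(s_j)$. Each such $s_i$ contributes at most one binary branching (choosing whether its designated neighbor is black or white), so if $q$ denotes the number of single vertices that actually trigger a Case~4 split, the total number of leaves in the computation tree is at most $2^q$, and trivially $q \le p = |S|$.

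The substantive part is $q \le n/3$. Here I would argue as follows. Let $T = \{s_{i_1}, \dots, s_{i_q}\} \subseteq S$ be the single vertices that trigger a Case~4 split. For each such $s_{i_k}$, fix a private witness as follows: by the preceding Lemma, $G[N_U(s_{i_k})]$ is a union of a star and an independent set, and the pair of $s_{i_k}$ must be a maximum-degree vertex; in a Case~4 situation there is an edge $vw$ with $v \in N_U(s_{i_k})$ and $w \in N_U(s_j)$ for some $j \ne i_k$, so in particular $N_U(s_{i_k})$ contains at least one vertex $v \ne s_{i_k}$, and since $s_{i_k}$ is single and $v$ is uncolored, $v$ is distinct from the vertices associated to the other single vertices. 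The key observation is that the sets $N_U(s_1), \dots, N_U(s_p)$ are pairwise disjoint (this is exactly the partition of $U$ recorded just after Lemma~2, following from part (ii) of that lemma: every uncolored vertex has a \emph{unique} black neighbor and it is single). Hence $s_{i_k}$ together with its part $N_U(s_{i_k})$ occupies at least two vertices ($s_{i_k}$ itself, which lies in $D'$, plus at least one vertex of $U$), and these are disjoint across different $k$. To squeeze out the factor $3$ rather than $2$, I would further use the Case~4 edge: each $s_{i_k}$ with a Case~4 split has a neighbour $v_k \in N_U(s_{i_k})$ that is the endpoint of an edge leaving the part, and one checks that the map sending $s_{i_k}$ to the triple consisting of $s_{i_k}$ and two suitable vertices of $N_U(s_{i_k})$ (or $s_{i_k}$, $v_k$, and the far endpoint $w_k$ with a charging argument to avoid double-counting shared endpoints) is injective into disjoint size-3 subsets of $V$. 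The cleanest route is probably: since $G[N_U(s_{i_k})]$ contains an edge and the Case~4 edge forces $|N_U(s_{i_k})| \ge 2$ whenever the star is nontrivial, while the partition disjointness handles the bookkeeping, assign to each splitting $s_{i_k}$ a set of $3$ distinct vertices of $V$, pairwise disjoint over $k$, giving $3q \le n$.

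The main obstacle I anticipate is the charging argument for the factor $3$: a Case~4 edge $v_k w_k$ has $w_k$ lying in \emph{another} part $N_U(s_j)$, and that same part might be referenced by several splitting single vertices, so one must be careful that the three vertices charged to $s_{i_k}$ are genuinely private to $s_{i_k}$. The safe fix is to charge only \emph{within} $N_U(s_{i_k}) \cup \{s_{i_k}\}$, which is private by the disjointness of the partition; one then needs to verify that a Case~4 split guarantees $|N_U(s_{i_k})| \ge 2$, i.e.\ that the part has at least two uncolored vertices — which follows because the max-degree (star-center) vertex and at least one leaf both lie in $N_U(s_{i_k})$ once that part contains an edge, and Case~4 with the branching being nontrivial (Case~2 not applying) forces the star component to be present. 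Combining $s_{i_k}$ with these two gives the required disjoint triples and hence $q \le n/3$.
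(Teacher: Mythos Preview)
Your argument has a genuine gap, and it stems from how you define $q$. You take $q$ to be the number of single vertices that ``trigger a Case~4 split'', but this quantity is not well-defined across the whole computation tree, and your claim that ``each such $s_i$ contributes at most one binary branching'' is false. When Case~4 is applied to $s_i$ via an edge $vw$ with $v\in N_U(s_i)$ and $w\in N_U(s_j)$, in the branch where $v$ is coloured white it is $s_j$ that becomes paired, while $s_i$ stays single and may trigger Case~4 again further down that same branch. So one $s_i$ can be responsible for many branchings along a single root-to-leaf path, and counting the distinct $s_i$'s that ever branch does not bound the depth.

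The paper avoids all of this by fixing $q$ \emph{before} any branching occurs. It first applies Cases~1 and~2 (which never branch) exhaustively, with propagation, until every surviving single vertex $s_i$ has $|N_U(s_i)|\ge 2$; then $S'\subseteq S$ is the set of survivors and $q:=|S'|$. At that single frozen moment the sets $\{s_i\}\cup N_U(s_i)$ for $s_i\in S'$ are pairwise disjoint and each has size at least~$3$, so $3q\le n$ drops out immediately, with no charging argument and no concern about $U$ shrinking later. For the bound $2^q$ on the number of leaves, the decisive observation (which your write-up is missing) is that \emph{every} Case~4 branching pairs at least one single vertex: in one child $s_i$ becomes paired, in the other $s_j$ does. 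Hence the number of single vertices still unpaired strictly decreases along every root-to-leaf path, so no path has length exceeding $q$, and there are at most $2^q$ leaves. The right potential function is the count of remaining single vertices, not the identity of the vertex currently being processed. Your eventual route to $|N_U(s_{i_k})|\ge 2$ via ``Case~2 not applying'' is essentially the same fact the paper uses, but you reach it only after several detours; defining $q$ up front as $|S'|$ makes both inequalities one-line consequences.
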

\begin{proof}: By Theorem 1, it follows that $p \leq |D|$. On the other
hand, we can apply the above Cases 1-4, in such an ordering
that we keep applying Cases 1 and 2, with propagation until all
remaining single vertices $s_i$ satisfy $|N(s_i) \cap U| \geq 2$.
Let $S' \subset S$ denote the set of remaining single vertices,
and $q = |S'|$. Consequently, $q \leq \frac{n}{3}$.

Next, examine the parallel computations. They are generated by
Case 4. Let $vw$ be an edge of $G$, where $v \in N(s_i) \cap U$
and $w \in N(s_j) \cap U$, $i \neq j$. In one of the instances,
$v$ is black, meaning that $s_i$ becomes paired, while in the
other one $w$ is black, implying that $s_j$ becomes paired. This
means that the size of the set $S'$ of single vertices always
decreases by at least one unit in all computations. Hence there
are at most $2^{q}$ parallel computations.
\end{proof}

Considering that the remaining operations involved in each parallel thread of the
algorithm can be performed in linear time, it is not hard to
conclude  that there there is an $O(2^{q}m)$ time
algorithm to obtain a minimum DIM, if any, extensible from a
partial valid coloring $C$ of a weighted graph $G=(V,E)$ such that
$D=C^{-1}(black)\cup C^{-1}(white)$ is a dominating set of $G$.
%We call this algorithm as \emph{CUA(common use algorithm)}.

The complexity of the algorithm depends on the size of the
dominating set $D$ employed. We remark that if  $G=(V,E)$ has no
isolated vertices then we can easily find in linear time a
dominating set with at most half the vertices. Just determine a
maximal independent set $I$. Clearly, $I$ and $V \setminus I$ are
both dominating sets of $G$ and one of them has at most
$\frac{n}{2}$ vertices.
%obtain efficiently a dominating set with at most half of the vertices.
%The following
%algorithm constructs such a set. The vertices of the graph $G$ are
%classified into three types: {\it dominating, dominated} and {\it
%unprocessed}. Iteratively, the dominating vertices are those so
%far belonging to $D$, the dominated are neighbors of those in $D$,
%while the unprocessed vertices are the remaining ones. However,
%during the process, dominated vertices may become dominating. The
%algorithm is as follows. \\

%\emph{At the beginning all vertices are unprocessed. Then select
%an arbitrary vertex $v$, include it in $D$, mark it as dominating
%and mark all its neighbors as dominated. Iteratively, perform the
%following operations. If no unprocessed vertices remain terminate
%the algorithm. Otherwise, select a unprocessed  vertex $v$
%adjacent to some unprocessed or dominated vertex $w$, include $w$
%in $D$, mark $w$ as dominated, and mark the neighbors of $w$ as
%dominated.} \\

%It is straightforward to conclude that the above algorithm indeed
%finds a dominating set of size at most $n / 2$, in linear time.

Finally, in order to obtain the minimum weighted DIM of the graph
$G$, we have to apply the described algorithm for all possible
bi-colorings of $D$. There are exactly $2^{|D|}$ such colorings.
Therefore

\begin{theorem} There is an algorithm of complexity $O(min \{2^{2|D|}, 2^{\frac{5n}{6}}\}\cdot m)
%=O(m min \{4^{|D|},{\sqrt[6]{2^5}}^{n}\})= O^*(min \{4^{|D|},{\sqrt[6]{2^5}}^{n}\})
\approx O^*(min \{ 4^{|D|}, 1.7818^{n}\})$ to compute a minimum
weighted DIM of a weighted graph, if existing.
\end{theorem}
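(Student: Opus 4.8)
The plan is to promote the $O(2^{q}m)$ subroutine established above --- the one that, given a partial valid coloring $C$ whose colored set $D = C^{-1}(black)\cup C^{-1}(white)$ dominates $G$, returns a minimum weight DIM extending $C$ (if any) --- into an algorithm for the whole problem by exhaustively trying the coloring of a fixed dominating set. First I would fix a dominating set $D$ of $G$ and enumerate all $2^{|D|}$ assignments of black/white to the vertices of $D$. Each assignment is a partial coloring $C$ with $C^{-1}(black)\cup C^{-1}(white)=D$; if it is not valid the first propagation step discards it, and otherwise the subroutine applies since $D$ dominates $G$. Running the subroutine on every assignment and keeping the lightest total valid coloring produced yields a minimum weighted DIM: any DIM of $G$ determines a total valid coloring whose restriction to $D$ is one of the enumerated assignments, and by Theorem 1 that total coloring is an extension of the stable coloring the subroutine starts from, hence it is among the colorings the subroutine explores.

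For the complexity, each of the $2^{|D|}$ calls costs $O(2^{q}m)$, so the total is $O(2^{|D|+q}m)$. Now I would feed in the two independent bounds on $q$ from the previous lemma. The bound $q\le p=|S|\le|D|$ gives $2^{|D|+q}\le 2^{2|D|}$, i.e. $O(2^{2|D|}m)=O^{*}(4^{|D|})$, which is the estimate to use when a small dominating set is known. For a bound in $n$ alone I would choose $D$ carefully: compute a maximal independent set $I$ in linear time; both $I$ and $V\setminus I$ dominate $G$, so take for $D$ whichever is smaller, giving $|D|\le n/2$. Together with $q\le n/3$ this yields $2^{|D|+q}\le 2^{n/2+n/3}=2^{5n/6}$, hence $O(2^{5n/6}m)$, and since $2^{5/6}\approx 1.7818$ this is $O^{*}(1.7818^{n})$. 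Running whichever of the two variants is cheaper establishes the claimed $O(\min\{2^{2|D|},2^{5n/6}\}\cdot m)\approx O^{*}(\min\{4^{|D|},1.7818^{n}\})$.

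I do not expect a genuine obstacle: everything needed has been proved, and the argument is mostly bookkeeping. The two points that deserve an explicit line are (a) correctness of the exhaustive search --- that the restriction to $D$ of an optimal total valid coloring is one of the enumerated partial colorings and is then recovered by the subroutine, which is exactly what Theorem 1 guarantees --- and (b) that a dominating set of size at most $n/2$ can be found in linear time, already noted above via maximal independent sets. The only mildly delicate point is that the same factor $2^{|D|}$ is combined with the bound on $q$ in two different ways; this is legitimate because it is one algorithm, and only the analysis, together with the choice of $D$, changes between the two estimates.
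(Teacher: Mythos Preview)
Your proposal is correct and follows essentially the same approach as the paper: enumerate the $2^{|D|}$ bi-colorings of a dominating set, run the $O(2^{q}m)$ subroutine on each, and bound $2^{|D|+q}$ using $q\le|D|$ on the one hand and $|D|\le n/2$, $q\le n/3$ on the other. The paper's proof is in fact just the one-line chain of equalities you carry out, and your added remarks on correctness (restriction of a total valid coloring to $D$, invocation of Theorem~1) and on obtaining $|D|\le n/2$ via a maximal independent set match the surrounding text of the paper.
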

\begin{proof}
The complexity is $O(2^{|D|}\cdot 2^q \cdot m)=O(2^{|D|}\cdot 2^{min\{|D|,\frac{n}{3}\}} \cdot m)=O(2^{|D|}\cdot min\{2^{|D|},2^{\frac{n}{3}}\} \cdot m)=O(min\{2^{2|D|},2^{|D|+\frac{n}{3}}\} \cdot m)=O(min\{2^{2|D|},2^{\frac{n}{2}+\frac{n}{3}}\} \cdot m)=O(min\{2^{2|D|},2^{\frac{5n}{6}}\} \cdot m)$.
\end{proof}

%\begin{theorem}
%Given a weighted graph $G=(V,E)$ and a dominating set $D$ of $G$. There is an $O(4^{|D|}(n+m))$ time algorithm to find a minimum DIM (if any) of $G$
%\end{theorem}

%\begin{proof}
%There are $2^{|D|}$ different ways to assign colors (black and white) to vertices of $D$. Consider each of them as a coloring $C$, if $C$ is a partial valid coloring then apply CUA to obtain a minimum DIM (if any) extensible from $C$. The minimum of all minimums is the minimum DIM of $G$.
%\end{proof}

\begin{corollary}
The above algorithm  solves the minimum weighted DIM problem in
$O(m)$ time given a dominating set of fixed size.
\end{corollary}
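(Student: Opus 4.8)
The final statement is a corollary of Theorem 2, so the plan is essentially to specialize the complexity bound $O(\min\{2^{2|D|},2^{5n/6}\}\cdot m)$ to the case where $|D|$ is a constant. First I would invoke Theorem 2 directly: it guarantees an algorithm running in time $O(2^{2|D|}\cdot m)$ (taking the first term of the minimum), which computes a minimum weighted DIM of the graph whenever one exists. The only work left is to observe that if we are \emph{given} a dominating set $D$ whose size is bounded by a fixed constant $k$, then $2^{2|D|}\le 2^{2k}=O(1)$, so the running time collapses to $O(m)$.

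The key steps in order: (1) take the supplied dominating set $D$ with $|D|\le k$ for a constant $k$; (2) run the algorithm of Theorem 2 on $G$ with this $D$ — note the algorithm iterates over all $2^{|D|}$ bi-colorings of $D$, and for each bi-coloring computes a stable extension and explores at most $2^q$ parallel threads where $q\le|D|$, each thread costing $O(m)$ by the discussion preceding Theorem 1; (3) conclude the total cost is $O(2^{|D|}\cdot 2^{|D|}\cdot m)=O(2^{2k}m)=O(m)$ since $k$ is fixed; (4) note correctness is inherited verbatim from Theorem 2 — the algorithm returns the minimum weight DIM if any exists, and reports failure otherwise.

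I do not expect any genuine obstacle here, since this really is a one-line specialization. The only point that deserves a sentence of care is making explicit that the hypothesis ``given a dominating set of fixed size'' means the dominating set is part of the input, so we do not pay for computing it; if it were merely promised to exist but not handed over, one would need to search for it (which for fixed size $k$ can still be done in polynomial time by brute force over $k$-subsets, but that is beside the point of the corollary as stated). Thus the proof is: apply Theorem 2 with $|D|=O(1)$, whence $O(\min\{2^{2|D|},2^{5n/6}\}\cdot m)=O(m)$.

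\begin{proof}
Apply Theorem 2 to $G$ together with the given dominating set $D$. Since $|D|$ is a fixed constant, $2^{2|D|}=O(1)$, so the bound $O(\min\{2^{2|D|},2^{5n/6}\}\cdot m)$ becomes $O(m)$. The algorithm of Theorem 2 correctly computes a minimum weighted DIM of $G$ if one exists, and reports that none exists otherwise; hence it solves the minimum weighted DIM problem in time $O(m)$.
\end{proof}
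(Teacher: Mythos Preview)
Your proof is correct and is exactly the intended argument: the paper states the corollary without proof, treating it as an immediate consequence of the complexity bound in Theorem~2, and your specialization $|D|=O(1)\Rightarrow 2^{2|D|}=O(1)\Rightarrow O(m)$ is precisely that consequence spelled out.
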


\section{An algorithm based on maximal independent sets}

In this section, we describe an exact algorithm for finding a
minimal weighted DIM of a graph, based on enumerating maximal
independent sets. We consider a weighted graph $G=(V,E)$.

%Clearly, in any possible DIM, the vertices of $G$ which are part
%of the DIM form a subset of a maximal independent set.

Any maximal independent set $I \subseteq V$ induces a partial
bi-coloring in $G$ as follows: 
\begin{itemize}
\item color as black all vertices of $V
\setminus I$ 
\item color as white the vertices of $I$ except those
having exactly one single neighbor.
\end{itemize}

\begin{observation}
If all vertices of $G$ have degree $\ne 1$ then the above partial coloring is total.
\end{observation}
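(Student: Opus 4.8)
The plan is to prove totality directly, by checking that no vertex of $G$ can remain without a color. Every vertex of $V\setminus I$ is made black by the first clause of the construction, so the only candidates for an uncolored vertex are the exceptional vertices of $I$ --- those which, in the wording used above, have ``exactly one single neighbor''. The first step is to pin this configuration down. Since $I$ is independent, every neighbor of a vertex $v\in I$ lies in $V\setminus I$ and has therefore already been colored black; such a neighbor $w$ is \emph{single} exactly when it has no black neighbor, i.e.\ when $N(w)\subseteq I$. If moreover $w$ has degree one, then $v$ is the unique neighbor of $w$, so making $v$ white would leave the black vertex $w$ with no available partner; this is exactly why the construction refrains from coloring $v$ and leaves it uncolored. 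Hence a vertex of $I$ can stay uncolored only if it is adjacent to a vertex of degree one.

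The second step is to invoke the hypothesis. If no vertex of $G$ has degree $1$ --- and, by the paper's standing assumption, none has degree $0$ either --- then $G$ has no vertex of degree one, so no vertex of $I$ is adjacent to one, and the exceptional subset of $I$ is empty. Consequently every vertex of $I$ is colored white and every vertex of $V\setminus I$ is colored black, so the partial coloring is total, which is the assertion.

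There is no serious obstacle here, the statement being an Observation; the only points that need a little care are fixing the precise meaning of the clause ``exactly one single neighbor'' and confirming that a vertex of $I$ cannot be skipped for any reason other than that clause --- the latter being immediate from the independence of $I$, which already assigns the color black to every neighbor of every vertex of $I$. It is also worth remarking that the hypothesis cannot be dropped: in the presence of vertices of degree one, a maximal independent set may induce a partial coloring that is not total, and excluding this is precisely the role of the condition ``degree $\ne 1$''.
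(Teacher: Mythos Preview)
The paper states this as an Observation without giving a proof, so there is no argument of the paper's to compare against directly; on its own merits, your proposal has a gap at the key step.

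The phrase ``exactly one single neighbor'' has to be read as ``exactly one neighbor, that neighbor being single'' --- so the uncolored vertex $v\in I$ itself has degree~$1$. With this reading the observation is immediate: if no vertex has degree $1$, no $v\in I$ meets the exception clause, every vertex of $I$ is colored white, and the coloring is total. (This is also the reading needed for the proof of Lemma~5: an uncolored $w\in I$ that is later recolored black has all its neighbors in $V\setminus I$ already black, so it must have degree~$1$ for the resulting coloring to be valid.) Your argument instead tries to show that the \emph{neighbor} $w$ of $v$ has degree~$1$: you introduce ``If moreover $w$ has degree one'' as an extra hypothesis, give a motivational explanation of why the construction would then skip $v$, and conclude ``Hence a vertex of $I$ can stay uncolored only if it is adjacent to a vertex of degree one'' as though this had been established. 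It has not --- nothing you wrote forces $w$ to have degree~$1$, and under the literal reading ``exactly one neighbor that is single'' one can build graphs with all degrees at least $2$ in which some $v\in I$ has exactly one single neighbor, so the observation would actually fail under that reading. The repair is simply to locate the degree-$1$ vertex correctly: it is $v$, not $w$.
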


The algorithm is then based on the following lemma.

\begin{lemma}
Let $G$ be a graph, $I$ a maximal independent set of it and $C$
the partial bi-coloring induced by $I$. Then $C$  is extensible to
a DIM if and only if $C$ is a valid coloring and each single
vertex, if existing, has at least one uncolored neighbor in $C$.
\end{lemma}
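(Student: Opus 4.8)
The plan is to prove both directions of the equivalence directly from the definitions, using the colour-propagation rules and the structure of the partial bi-colouring induced by a maximal independent set $I$. Recall that in this colouring the vertices of $V\setminus I$ are black, the vertices of $I$ are white except those with exactly one single (black) neighbour, which are left uncoloured; a single black vertex is one with no black neighbour, so exactly the vertices of $V\setminus I$ whose entire neighbourhood lies in $I$.

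For the \emph{forward} direction, assume $C$ extends to a total valid colouring $C''$, i.e.\ to a DIM. Then $C$ itself must be valid (any subcolouring of a valid total colouring is a valid partial colouring: no two white vertices adjacent, no black vertex with two black neighbours), giving the first condition. For the second condition, suppose some single vertex $s$ has all its neighbours already coloured in $C$. By construction every neighbour of $s$ lies in $I$, hence is white in $C$ unless it was left uncoloured — but we are assuming all neighbours of $s$ are coloured, so they are all white. In $C''$, being an extension of $C$, all these neighbours stay white, so $s$ has no black neighbour in $C''$ and remains single, contradicting the validity of the total colouring $C''$ (every black vertex must be paired). Hence each single vertex has an uncoloured neighbour.

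For the \emph{converse}, assume $C$ is valid and each single vertex has at least one uncoloured neighbour. I would first observe that the only uncoloured vertices are precisely those vertices $v\in I$ having exactly one single neighbour, and that $I$ is a dominating set of $G$ (it is maximal independent), and that $C^{-1}(black)\cup C^{-1}(white)\supseteq$ this dominating set — actually the set of coloured vertices contains $V\setminus I$, which together with the coloured part of $I$ still dominates, but more simply one notes $I$ itself is dominating and every vertex of $I$ is either white or uncoloured-with-a-single-neighbour. The cleaner route is: apply the propagation rules to $C$ to obtain a stable colouring $C'$; by the structure of the excerpt's Lemma~1 (applied with the dominating set $D$ being the coloured vertices of $C$, which contains the maximal independent set's complement and dominates $G$), any still-uncoloured vertex of $C'$ has exactly one black neighbour and it is single. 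Then pair off each such single vertex with one of its uncoloured neighbours (which exists by hypothesis, and survives propagation since an uncoloured neighbour of $s$ in $C$ stays uncoloured until forced), colouring that neighbour black and the others white; the hypothesis that $C$ is valid, together with the fact that distinct single vertices attached to the same uncoloured vertex cannot occur (an uncoloured vertex has exactly one single neighbour), ensures this assignment is consistent and yields a valid total colouring.

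The main obstacle I expect is the converse direction: one must verify that the greedy pairing does not create conflicts — specifically that assigning a black pair to $s_i$ does not force a neighbour of some $s_j$ to be recoloured white in a way that leaves $s_j$ unpaired, and that no two white vertices become adjacent. This is exactly where the hypotheses "$C$ valid" and "each single vertex has an uncoloured neighbour," combined with the earlier structural lemmas (each uncoloured vertex has a unique single neighbour), must be used carefully; handling the interaction between different single vertices sharing uncoloured neighbours, or adjacent uncoloured vertices, is the delicate part, and here I would lean on Lemma~1(ii) of the excerpt and the fact that $I$ is independent to rule out the bad configurations.
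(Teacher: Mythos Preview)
Your forward direction matches the paper's essentially word for word.

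For the converse, however, the paper is far more direct than your proposal: it invokes neither propagation rules, nor stable colourings, nor Lemma~1. It simply builds the total colouring in one step --- for each single black vertex $v$, choose any uncoloured neighbour $w$ and colour it black, then colour all remaining uncoloured vertices white --- and checks validity immediately: the white set lies inside $I$ and is therefore independent, and the black set is $1$-regular. The pairing conflicts you worry about (two singles competing for the same uncoloured vertex) are ruled out directly by the definition of the induced colouring, since an uncoloured vertex has \emph{exactly one} single neighbour; you actually state this fact yourself, but then bury it under stabilisation and the Section~2 machinery rather than using it as the whole argument. Your route through Lemma~1 can be made to work, but the additional bookkeeping (showing the coloured set dominates, tracking which uncoloured neighbours survive propagation) buys nothing that the definition of $C$ does not already provide for free; the paper exploits the special structure of a colouring induced by a maximal independent set, whereas you fall back on the general-purpose tools from the vertex-domination algorithm.
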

\begin{proof}
   $\Rightarrow)$ It is easy to see that if $C$ is not a valid coloring, then it is not extensible to a $DIM$. Besides, if $C$ has a single vertex $v$ with no uncolored neighbors then all neighbors of $v$ are white in $C$ and in any extension of $C$. Also,$C$ is not extensible to a $DIM$ because $v$ can not ever get its pair. \\
   $\Leftarrow)$ Let $C$ be a valid coloring where each single black vertex has at least one uncolored neighbor. Then for each single black vertex $v$, choose any uncolored neighbor $w$ to be its pair ($w$ has exactly one single neighbor) and the remaining of uncolored vertices get color white. In this total coloring, the black vertices induce an 1-regular subgraph and the white vertex set is an independent set because it is part of $I$. Hence, the total coloring is valid and hence a DIM.
   %Por que? Si $C$ es un valid coloring, y cada single vertex tiene al menos one uncolored neighbor => DIM existe?
\end{proof}

The algorithm can then be formulated as follows. Generate the
maximal independent sets $I$ of $G$. For each $I$, find its
induced coloring $C$. If $C$ is invalid or some single vertex has
no uncolored neighbor then do nothing. Otherwise, for each single
vertex $v$ in $C$, if any, choose the minimum weight $vw$, among
the uncolored neighbors of $v$; then color $w$ as black and the
remaining neighbors of $v$ as white. The set of black vertices
then forms a DIM of $G$. At the end select the minimum weight
among all DIMs obtained in the process, if any.

Clearly, this algorithm determines the minimum weight DIM of a
weighted graph $G=(V,E)$ because given any DIM $E' \subseteq E$ of
$G$, the vertex set formed by those vertices not incident to any
of the edges of $E'$ is an independent set and as such, is clearly
a subset of some maximal independent set of $G$. So, any DIM $E'$
is considered in the algorithm.

%Given a subset of vertices $I\subseteq V$, I is a maximal
%independent set of $G$ if only if  $V\setminus I$ is a minimal
%dominating set of $G$. Any black-white partition of $G$, the white
%vertices ($W$) form an independent set of $G$. Hence, it must be
%subset of some maximal independent set $I$ of $G$. Similarly, the
%black vertices ($B$) must be superset of some minimal dominating
%set. For each minimal dominating set $D$ of $G$, we construct a
%partial coloring $C$ assigning color black to every vertex of $D$.
%If $C$ is valid then we applied CUA to get total valid colorings
%extensible from $C$. It is important to note that there is at most
%one total valid coloring can be obtained using CUA for each $C$.
%Because, the part (ii) is never needed to apply. For every
%$N_{\overline{D'}}(u_i) \in \rho(U)$, it's able to apply part (i).
%Then no parallel executions happen. So, the complexity is reduced
%to $O(n+m)$ for each partial valid coloring $C$. As
%consequence, we have an algorithm for general weighted graph. That
%is, generate all  minimal dominating sets of $G$ (using maximal
%independent sets of $G$) and for each of them use the above
%procedure.

All the operations performed by the algorithm relative to a fixed
maximal independent set can be performed in linear time $O(m)$. If
$G$ has $\mu$ maximal independent sets, we can generate them all
in time  $O(nm\mu)$ time \cite{Ts}. Therefore the complexity of
the entire algorithm is $O(nm^2\mu)$. On the other hand, $\mu \leq
O(3^{\frac{n}{3}})$,
%\cite{Mo-Mo},
leading to a worst case of
$O(3^{\frac{n}{3}}nm^2)\approx O^*(1.44225^{n})$ time. In
particular, if $G$ is a bipartite graph then $\mu \leq
2^{\frac{n}{2}}$ and the complexity reduces to $O^*(1.41421^{n})$.
In any case, if $G$ has a polynomial number of maximal independent
sets then the algorithm terminates within polynomial time.

Finally, we observe the following additional relation between
maximal independent sets and DIM's.

\begin{lemma}
Let $G(V,E)$ be a graph with no isolated edges, $E' \subseteq E$ a DIM of $G$, and $I
\subseteq V$ the independent set formed by those vertices not
incident to any of the edges of $E'$. Then $I$ is contained in
exactly one maximal independent set of $G$.
\end{lemma}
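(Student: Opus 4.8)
The plan is to show that the independent set $I$ (vertices not incident to any edge of the DIM $E'$) is \emph{maximal}, which immediately gives that the unique maximal independent set containing $I$ is $I$ itself. Recall that $B = V \setminus I$ is exactly the set of endpoints of the edges of $E'$, and since $E'$ is an induced matching that dominates all edges, $G[B]$ is $1$-regular: every vertex of $B$ has exactly one neighbor inside $B$, namely its partner under $E'$.

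First I would take an arbitrary vertex $v \in B$ and argue that $v$ has at least one neighbor in $I$. Suppose not; then every neighbor of $v$ lies in $B$. Since $v$ has exactly one neighbor $v'$ in $B$ (its $E'$-partner), this forces $N(v) = \{v'\}$, i.e. $v$ has degree $1$, and then the edge $vv' \in E'$ is an isolated edge of $G$ — contradicting the hypothesis that $G$ has no isolated edges. Hence every $v \in B$ has a neighbor in $I$, so $I$ dominates $B$; combined with $I$ being independent, this shows $I$ is a maximal independent set.

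Finally, to conclude the uniqueness statement: if $I \subseteq M$ for some maximal independent set $M$ of $G$, then $M$ is in particular an independent set containing the maximal independent set $I$, forcing $M = I$. Thus $I$ is contained in exactly one maximal independent set, namely itself.

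The only subtle point is the use of the no-isolated-edges hypothesis: without it, a component consisting of a single edge $uv$ would put \emph{both} $u$ and $v$ in $B$ with $I \cap \{u,v\} = \emptyset$ and no $I$-neighbor, so $I$ would fail to dominate that component and could be extended by either $u$ or $v$, giving two distinct maximal independent sets containing $I$. So the degree-$1$ case analysis above is exactly where the hypothesis is needed; everything else is a direct unwinding of the definition of a dominating induced matching.
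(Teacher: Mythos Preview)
Your argument has a genuine gap: the claim that $I$ itself is always a maximal independent set is false, and the error is in the step ``$v$ has degree $1$, and then the edge $vv' \in E'$ is an isolated edge of $G$.'' A vertex of degree $1$ does \emph{not} force its incident edge to be isolated; for that you would need $v'$ to have degree $1$ as well, which you have not shown and which need not hold.

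Concretely, take $G$ to be the path $a\!-\!b\!-\!c\!-\!d\!-\!e$. This graph has no isolated edges, and $E'=\{ab,de\}$ is a DIM (each of the four edges is dominated exactly once). Here $B=\{a,b,d,e\}$ and $I=\{c\}$. The vertex $a\in B$ has $N(a)=\{b\}\subseteq B$, so $a$ has no neighbour in $I$; yet $ab$ is not an isolated edge since $b$ has degree $2$. Thus $I=\{c\}$ is not maximal (it is properly contained in $\{a,c,e\}$), so your strategy of proving maximality of $I$ cannot succeed.

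The paper's proof does not attempt to show that $I$ is maximal. Instead it assumes two distinct maximal independent sets $I_1,I_2\supseteq I$, picks $v_2\in I_2\setminus I$, and uses maximality of $I_1$ to find $v_1\in I_1\setminus I$ adjacent to $v_2$. Since $v_1,v_2\in B$ and $G[B]$ is $1$-regular, $v_1v_2\in E'$; but now \emph{both} $v_1$ and $v_2$ lie in independent sets containing $I$, so neither has a neighbour in $I$, and hence each has degree $1$. That is what makes $v_1v_2$ an isolated edge and yields the contradiction. The key difference is that the paper's argument produces a \emph{pair} of $B$-vertices each lacking $I$-neighbours, whereas your argument only produces one.
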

\begin{proof}:  If $I$ is a maximal independent set there is nothing to prove. Otherwise, suppose the lemma is false and let $I_1$, $I_2$ be two distinct maximal independent sets properly containing $I$. Let $V_1 = I_1 \setminus I$, and $V_2 = I_2 \setminus I$. Choose any $v_2 \in V_2$. Clearly, $\{v_2\} \cup I$ is an indepedent set, and we know that $I_1 = V_1 \cup I$ is a maximal one. Consequently, there must be some vertex  $v_1 \in V_1$ adjacent to $v_2$. However, both $v_1$ and $v_2$ are vertices incident to edges of the DIM $E'$. Consequently, $v_1v_2 \in E'$.  In this case, $v_1v_2$ must form an isolated edge of $G$, a contradiction. Therefore the lemma holds. 
\end{proof}

Based on the above lemma and that fact that every isolated edge must be part of any DIM, it is simple to extend the exact algorithm proposed in this section, so as to count the number of
distinct DIM's (unweighted or minimum weighted) of $G$, in the same complexity as deciding whether $G$ admits a DIM. Observe that $G$ may contain an exponential number of DIM's.


\begin{thebibliography}{99}
\bibitem{Br-Lo} A. Brandst\"adt and V. V. Lozin, On the linear
structure and clique-width of bipartite permutation graphs, {\it
Ars Combin.} 67 (2003), 273-281.

\bibitem{Ca-Ce-De-Si} D. M. Cardoso, J. O. Cerdeira, C. Delorme and P. C. Silva,
Efficient edge domination in regular graphs, {\it Discrete Appl. Math.} 156 (2008), 3060-3065.

\bibitem{Ca-Ko-Lo}
D. M. Cardoso, N. Korpelainen and V. V. Lozin, On the complexity
of the induced matching problem in hereditary classes of graphs,
{\it Discrete Appl. Math.} 159 (2011), 521-531.

\bibitem{Do-Lo}
D. M. Cardoso and V. V. Lozin, Dominating induced matchings, {\it
Lecture Notes in Comput. Sci.} 5420 (2009), 77-86

\bibitem{Gr-Sl-Sh-Ho}
D. L. Grinstead, P. J. Slater, N. A. Sherwani and N.D. Holmes,
Efficient edge domination problems in graphs, {\it Inform. Process. Lett.} 48 (1993), 221-228

\bibitem{Ko}
N. Korpelainen, A polynomial-time algorithm for the dominating
induced matching problem in the class of convex graphs, {\it
Electron. Notes Discrete Math.} 32 (2009), 133-140

\bibitem{Lu-Ko-Ta}
C. L. Lu, M.-T. Ko and C. Y. Ta, Perfect edge domination and
efficient edge domination in graphs, {\it Discrete Appl. Math.} 119 (2002), 227-250.

\bibitem{Lu-Ta} C. L. Lu and C. Y. Ta, Solving the weighted efficient edge domination
problem on bipartite permutation graphs, {\it Discrete Appl. Math.} 87 (1998), 203-211.

%\bibitem{Mo-Mo}, J. W. Moon and L. Moser, On cliques of graphs,
%{\it Israel J. Math.} 3 (1965), 23-28.

\bibitem{Ts} S. Tsukiyama, M. Ide, H. Ariyoshi and I. Shirakawa, A new
algorithm for generating all maximal independent sets, {\it SIAM J. Comput.} 6 (1979), 505-579.\end{thebibliography}
\end{document}